\newcommand{\argmin}{\operatornamewithlimits{arg\,min}}
\newcommand{\argmax}{\operatornamewithlimits{arg\,max}}
\newcommand{\abs}[1]{\left\lvert{#1}\right\rvert}
\newcommand{\norm}[1]{\left\lVert{#1}\right\rVert}
\newcommand{\nn}{\nonumber}
\newtheorem{theorem}{Theorem}
\newtheorem{prop}{Proposition}
\newtheorem{cor}{Corollary}
\begin{document}
%
\title{On The Compound MIMO Wiretap Channel with Mean Feedback}
%
%
%
%
\author{\IEEEauthorblockN{$^{\dagger}$Amr Abdelaziz, $^{\dagger}$C. Emre Koksal and $^{\dagger}$Hesham El Gamal & $^*$Ashraf D. Elbayoumy} \IEEEauthorblockA{$^{\dagger}$Department of Electrical and Computer Engineering & $^*$Department of Electrical Engineering\\
The Ohio State University & Military Technical College\\
Columbus, Ohio 43201 & Cairo, Egypt
}}
\maketitle

\begin{abstract}
Compound MIMO wiretap channel with double sided uncertainty is considered under channel mean information model. In mean information model, channel variations are centered around its mean value which is fed back to the transmitter. We show that the worst case main channel is anti-parallel to the channel mean information resulting in an overall unit rank channel. Further, the worst eavesdropper channel is shown to be isotropic around its mean information. Accordingly, we provide the capacity achieving beamforming direction. We show that the saddle point property holds under mean information model, and thus, compound secrecy capacity equals to the worst case capacity over the class of uncertainty. Moreover, capacity achieving beamforming direction is found to require matrix inversion, thus, we derive the null steering (NS) beamforming as an alternative suboptimal solution that does not require matrix inversion. NS beamformer is in the direction orthogonal to the eavesdropper mean channel that maintains the maximum possible gain in mean main channel direction. Extensive computer simulation reveals that NS performs very close to the optimal solution. It also verifies that, NS beamforming outperforms both maximum ratio transmission (MRT) and zero forcing (ZF) beamforming approaches over the entire SNR range. Finally, An equivalence relation with MIMO wiretap channel in Rician fading environment is established.  
\end{abstract}

\begin{IEEEkeywords}
MIMO Wiretap Channel, Compound Wiretap Channel, Mean Channel Information, Saddle point, Worst Case Capacity.
\end{IEEEkeywords}

%
\IEEEpeerreviewmaketitle

\section{Introduction}
A key consideration in determining the secrecy capacity of the MIMO wiretap channel is the amount of information available at the transmitter, not only about the eavesdropper channel, but also about the main channel. In principle, assuming perfect knowledge about the main channel, either full eavesdropper's channel state information (CSI) or, at least, its distribution are required to determine the secrecy capacity. The secrecy capacity of the general MIMO wiretap channel has been studied in \cite{elder_wiretap,MIMOME,mimo_sec_cap} assuming perfect knowledge of both channels.
\par
In practical scenarios, having even partial knowledge on the eavesdropper channel is typically not possible, especially, when dealing with strictly passive eavesdroppers. Further, in fast fading channel, it may also be unreasonable to have perfect main CSI at the transmitter. \textit{Compound wiretap channel} \cite{blackwell1959capacity,wolfowitz1959simultaneous} is a model that tackle these limitations in which CSI is known only to belong to a certain class of uncertainty. This assumption can be used to model eavesdropper only CSI \cite{liang2009compound} (single sided uncertainty) or both main and eavesdropper channels (double sided uncertainty)\cite{ekrem2010gaussian,schaefer2015secrecy}. Depending on the considered class of uncertainty, the secrecy capacity of the compound wiretap channel can be characterized. 
\par
 Classes of uncertainity in the compound wiretap channel can be characterized in two different categories as it pertains to the set that the main and/or eavesdropper channels belong to: 1) Finite state channels 2) Continous set. The discrete memoryless compound wiretap channel with countably finite uncertainty set was studied in \cite{liang2009compound,bjelakovic2013secrecy}. Meanwhile, the corresponding compound Gaussian MIMO wiretap channel with countably finite uncertainty set is analyzed in \cite{liang2009compound}. In both cases, the secrecy capacity is established only for the degraded case (i.e. main channel is stronger in all spatial directions). Meanwhile, the secrecy capacity itself remains unknown for the general indefinite case (i.e. main channel is stronger in subset of the available spatial directions). A closed form solution was obtained either in case of an isotropic eavesdropper \cite{schaefer2015secrecy} or the degraded case in the high SNR regime \cite{loyka2012optimal}. Although the optimal signaling scheme for the non-isotropic non-degraded case still not known to date in general, necessary conditions for optimality were derived in \cite{loyka2012optimal} and \cite{loyka2013further} for the deterministic known channel case. Recently in \cite{schaefer2015secrecy}, the compound Gaussian MIMO wiretap channel was studied under spectral norm constraint (maximum channel gain) and rank constraint for both single and double sided classes of uncertainty without the degradedness assumption.
 \par
In \cite{schaefer2015secrecy} (Theorem 3) it was shown that, the secrecy capacity of the compound MIMO wiretap channel is upper bounded by the worst case capacity over the considered class of uncertainty. The term worst case capacity is established by optimizing the input signal covariance for all possible main and eavesdropper channel, then, taking the minimum over all main and eavesdropper channels over the considered class of uncertainty. Moreover, it was also shown that, the compound secrecy capacity is lower bounded by the capacity of the worst possible main and eavesdropper channels. Here, the saddle point needs to be considered, i.e. $\max \min = \min \max$, where the $\max$ is taken over non-negative definite input covariance matrices subject to an average power constraint and the $\min$ is taken over the classes of channel uncertainty. If the saddle point property holds, the compound capacity is fully characterized and is known to match the worst case one. 
 \par
In this paper, we consider the class of channels with double sided uncertainty under channel mean information model. In mean information model, channel is centered around a mean value which is fed back to the transmitter. An example for the mean information model is the channel with a strong Line-of-Sight (LOS) component, the gain of which is known at the transmitter. While it is unlikely to expect the eavesdropper to share its CSI (even its mean channel) in some scenarios, a secure communication system may be designed in a way that puts physical restrictions on the locations of possible attacker. These physical restrictions can be informative to the transmitter and may enable to achieve better secrecy rates by designing its signaling scheme accordingly. We first establish the worst case secrecy capacity of the compound MIMO wiretap channel under mean information model, then, we show that the saddle point property holds. We show that, the worst case main channel is anti-parallel to the channel mean information resulting in an overall unit rank channel. Further, the worst eavesdropper dropper channel is shown to be isotropic around its mean information. Accordingly, generalized eigenvector beamforming is known to be the optimal signaling strategy \cite{loyka2012optimal}\cite{li2009transmitter}. We show that the saddle point property holds under mean information model, and thus, compound secrecy capacity equals to the worst case capacity over the class of uncertainty. Further, as the generalized eigenvector solution requires matrix inversion, we introduce null steering (NS) beamforming, that is, transmission in the direction orthogonal to the eavesdropper mean channel direction maintaining the maximum possible gain in mean main channel direction, as an alternative suboptimal solution. Extensive computer simulation reveals that NS performs extremely close to the optimal solution. It also verifies that, NS beamforming outperforms both maximum ratio transmission (MRT) and zero forcing (ZF) beamforming approaches over the entire SNR range. Finally, An equivalence relation with MIMO wiretap channel in Rician fading environment is established.

%

\section{System Model and Problem Statement}
\label{sec:model}
\subsection{Notations}
In the rest of this paper we use boldface uppercase letters for random matrices, uppercase letters for their realizations, bold face lowercase letters for random vectors and lowercase letters for its realizations. Meanwhile, $(.)^{\dagger}$ denotes conjugate transpose, $\mathbf{I}_N$ denotes identity matrix of size $N$, $\det(.)$ denotes matrix determinant operator and $\mathbf{1}_{m \times n}$ denotes a $m \times n$ matrix of all 1's. 
\label{sec:screcy_capacity}
\subsection{System Model}
We consider the MIMO wiretap channel scenario in which a transmitter $\mathcal{A}$ with $N_a > 1$ antennas amounts to transmit a confidential message massage to a receiver, $\mathcal{B}$, having $N_b > 1$ antennas over an unsecure channel in the presence of a passive adversary, $\mathcal{E}$, equipped with $N_e > 1$ antennas. The discrete baseband equivalent channels for the signal received by each of the legitimate destination, $\mathbf{y}$, and the adversary, $\mathbf{z}$, are as follows:

\begin{align}
\label{eq:inout_security}
\mathbf{y} = \mathbf{H}_b   \mathbf{x} + \mathbf{n}_b, &\;\;\;\;\;\;\; \mathbf{z} = \mathbf{H}_e   \mathbf{x} + \mathbf{n}_e,
\end{align} 
where  $\mathbf{x} \in \mathbb{C}^{N_a\times 1}$ is the transmitted signal vector constrained by an average power constraint $\mathbb{E}[\mathbf{tr}(\mathbf{x}\mathbf{x}^{\dagger})] \leq P$. Also, ${\mathbf{H}_b} \in \mathbb{C}^{N_b \times N_a}$ and $\mathbf{H}_e  \in \mathbb{C}^{N_e \times N_a}$ are the channel coefficients matrices between message source, destination and adversary respectively. Finally, $\mathbf{n}_b \in \mathbb{C}^{N_b\times 1}$ and  $\mathbf{n}_e \in \mathbb{C}^{N_e\times 1}$ are independent zero mean normalized to unit variance circular symmetric complex random vectors for both destination and adversary channels respectively, where, $\mathbf{n}_b\sim \mathcal{CN}(0,\mathbf{I}_{N_b})$ and $\mathbf{n}_e\sim \mathcal{CN}(0,\mathbf{I}_{N_e})$.

\subsection{Problem Statement}

 In this paper, we consider the case where the transmitter does not know the exact realizations of both $\mathbf{H}_b$ and $\mathbf{H}_e$. Rather, it only knows that they both belong to a known compact (closed and bounded) uncertainty sets. Under the considered channel mean feedback model, we define channel uncertainty sets as follows: 
\begin{align}
\label{eq:S1}
\mathcal{S}_b &= \{\mathbf{H}_b:\mathbf{H}_b = \mathbf{H}_{\mu b} + \Delta\mathbf{H}_{b}, \abs{\Delta\mathbf{H}_{b}}_2 \leq \epsilon_b,
\nonumber \\&\;\;\;\;\;\;\;\;\mathbf{H}_{\mu b}= \lambda_{\mu b}^{1/2}v_{b}u_{b}^{\dagger} \} ,             
\end{align}
\begin{align}
\label{eq:S2}
\mathcal{S}_e &= \{\mathbf{H}_e:\mathbf{H}_e = \mathbf{H}_{\mu e} + \Delta\mathbf{H}_{e}, \abs{\Delta\mathbf{H}_{e}}_2 \leq \epsilon_e,
\nonumber \\&\;\;\;\;\;\;\;\;\mathbf{H}_{\mu e}= \lambda_{\mu e}^{1/2}v_{e}u_{e}^{\dagger} , u_e \in \mathcal{U} \} ,             
\end{align}
where, $\mathbf{H}_{\mu \circ}$ is the channel mean information which is assumed to be of unit rank and $ v_{\circ} \in \mathbb{C}^{N_{\circ} \times 1} , u_{\circ} \in \mathbb{C}^{N_a \times 1}$. We assume that the transmitter knows $u_b$, meanwhile, it knows only that $u_e \in \mathcal{U}$ where $\mathcal{U}$ is the set of uncertainty about the eavesdropper mean information. In the extreme case when $u_e$ is know exactly at the transmitter, we simply write $\mathcal{U}=\{u_e\}$.
\par
Further, $\Delta\mathbf{H}_{\circ}$ is the channel uncertain part which is assumed to satisfy the bounded spectral norm condition $\abs{\Delta\mathbf{H}_{\circ}}_2 \leq \epsilon_{\circ}$. In the compound wiretap channel, channels realizations are assumed to be fixed over the entire transmission duration. Therefore, $\Delta\mathbf{H}_{\circ}$ is considered fixed once it has been realized. This model is the scenario in which the eavesdropper can approach the transmitter up to a certain distance and from limited range of directions, see Fig. (\ref{fig:Compound_wiretap_model}).
\begin{figure}[ht]
\centering
\includegraphics[width=3 in, height = 1.5 in]{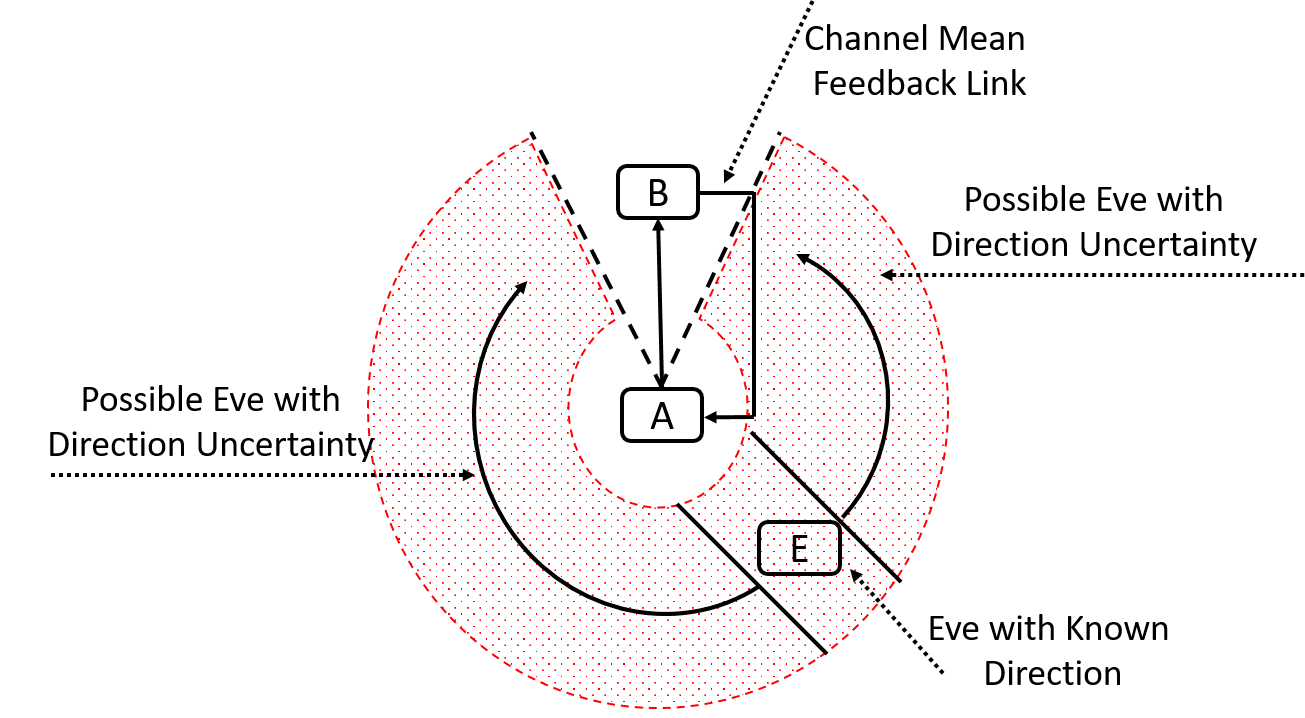}
\caption{Wiretap channel with physically constrained eavesdropper 
\label{fig:Compound_wiretap_model}}
\end{figure}
First, let us define
\begin{align}
\label{eq:seCap_known_phi}
C(\mathbf{W}_b,\mathbf{W}_e,\mathbf{Q}) = \log \dfrac{\det(\mathbf{I}_{N_a} + \mathbf{W}_b\mathbf{Q})}{\det(\mathbf{I}_{N_a} + \mathbf{W}_e\mathbf{Q})},
\end{align}
where $\mathbf{W}_{\circ} \triangleq \mathbf{H}_{\circ}^{\dagger}\mathbf{H}_{\circ}$, ${\circ} \in\{e,b\}$  is the channel Gram matrix and $\mathbf{Q}=\mathbb{E}\left[\mathbf{x}\mathbf{x}^{\dagger}\right]$ is the input signal covariance matrix. The capacity of the worst case main and eavesdropper channels can be defined as follows:
\begin{align}
\label{eq:CW}
C_w =   \min_{\substack{\mathbf{W}_b : \mathbf{H}_b\in \mathcal{S}_b \\ \mathbf{W}_e : \mathbf{H}_e\in \mathcal{S}_e}} \max_{\substack{\mathbf{Q} \succeq \mathbf{0} \\ \mathbf{tr}(\mathbf{Q}) \leq P}}  C(\mathbf{W}_b,\mathbf{W}_e,\mathbf{Q}). 
\end{align}
The following lower bound on the compound secrecy capacity was established in \cite{schaefer2015secrecy}:
\begin{align}
\label{eq:Cl}
C_l =    \max_{\substack{\mathbf{Q} \succeq \mathbf{0} \\ \mathbf{tr}(\mathbf{Q}) \leq P}} \min_{\substack{\mathbf{W}_b : \mathbf{H}_b\in \mathcal{S}_b \\ \mathbf{W}_e : \mathbf{H}_e\in \mathcal{S}_e}} C(\mathbf{W}_b,\mathbf{W}_e,\mathbf{Q}). 
\end{align}
thus, the following bounds on the compound capacity holds \cite{schaefer2015secrecy}:
\begin{align}
C_l \leq C_c \leq C_w
\end{align}
The problem under consideration is first to evaluate the lower bound on the compound capacity over the uncertainty sets by solving (\ref{eq:Cl}). To solve (\ref{eq:Cl}) we need to identify the worst case main and eavesdropper channels (i.e. main and eavesdropper channel realizations that minimize the lower bound), and then, determining the optimal signaling scheme accordingly. Further, we need to check whether the saddle point property, in the form $\max \min = \min \max$, for the considered class of channels.

\section{Compound Secrecy Capacity With Known Eavesdropper Mean Information}
\label{sec:known_location}
In this section we characterize the secrecy capacity of the considered compound wiretap channel when the mean information of the eavesdropper, $u_e$, is known exactly at the transmitter, i.e., $\mathcal{U}=\{u_e\}$. To proceed, we first need to identify the worst main channel Gram matrix, $\mathbf{W}_{bw}$ which is evaluated in the following proposition:
\begin{prop}
\label{prop:worst_Wb}
For the considered compound wiretap channel, for any non negative definite matrix $\mathbf{Q}$ and any $\mathbf{W}_e$ such that $ \mathbf{H}_e \in \mathcal{S}_e$ we have
\begin{align}
\label{eq:worst_Wb}
C(\mathbf{W}_b,\mathbf{W}_{e},\mathbf{Q}) \geq C(\mathbf{W}_{bw},\mathbf{W}_{e},\mathbf{Q})
\end{align}
where $\mathbf{W}_{bw} = (\lambda_{\mu b}^{1/2} - \epsilon_b)_+^{2} u_bu_b^{\dagger}$ is the worst main channel Gram matrix where $(x)_{+} = \max (0,x)$.
\end{prop}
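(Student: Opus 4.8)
The plan is to peel off the eavesdropper's determinant, reduce the claim to an inequality for the ``main'' determinant alone, and then to a Loewner-order statement about channel Gram matrices. By (\ref{eq:seCap_known_phi}), $C(\mathbf{W}_b,\mathbf{W}_e,\mathbf{Q})=\log\det(\mathbf{I}_{N_a}+\mathbf{W}_b\mathbf{Q})-\log\det(\mathbf{I}_{N_a}+\mathbf{W}_e\mathbf{Q})$, and the second term does not depend on $\mathbf{W}_b$; hence (\ref{eq:worst_Wb}) is equivalent to $\det(\mathbf{I}_{N_a}+\mathbf{W}_b\mathbf{Q})\geq\det(\mathbf{I}_{N_a}+\mathbf{W}_{bw}\mathbf{Q})$ for every $\mathbf{H}_b\in\mathcal{S}_b$. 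Using a Hermitian square root of $\mathbf{Q}\succeq\mathbf{0}$ this reads $\det(\mathbf{I}_{N_a}+\mathbf{Q}^{1/2}\mathbf{W}_b\mathbf{Q}^{1/2})\geq\det(\mathbf{I}_{N_a}+\mathbf{Q}^{1/2}\mathbf{W}_{bw}\mathbf{Q}^{1/2})$. Since $\mathbf{M}\mapsto\det(\mathbf{I}_{N_a}+\mathbf{M})=\prod_i(1+\lambda_i(\mathbf{M}))$ is nondecreasing in the Loewner order on Hermitian PSD matrices (Weyl), and congruence $\mathbf{M}\mapsto\mathbf{Q}^{1/2}\mathbf{M}\mathbf{Q}^{1/2}$ preserves that order, it is enough to prove the single matrix inequality $\mathbf{W}_b=\mathbf{H}_b^{\dagger}\mathbf{H}_b\;\succeq\;(\lambda_{\mu b}^{1/2}-\epsilon_b)_+^{2}\,u_bu_b^{\dagger}$ over $\mathcal{S}_b$. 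If $\epsilon_b\geq\lambda_{\mu b}^{1/2}$ this is vacuous: $\mathbf{W}_{bw}=\mathbf{0}$, while $\det(\mathbf{I}_{N_a}+\mathbf{W}_b\mathbf{Q})\geq1$ always; so assume $\lambda_{\mu b}^{1/2}>\epsilon_b$.

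The next step is a perfect-square lower bound on $\mathbf{H}_b^{\dagger}\mathbf{H}_b$. Substitute $\mathbf{H}_b=\lambda_{\mu b}^{1/2}v_bu_b^{\dagger}+\Delta\mathbf{H}_b$ with $\abs{v_b}=1$, set $r:=\Delta\mathbf{H}_b^{\dagger}v_b$ and $s:=\lambda_{\mu b}^{1/2}u_b+r$, and expand $\mathbf{H}_b^{\dagger}\mathbf{H}_b$; using $v_b^{\dagger}v_b=1$, the cross terms assemble into a square,
\[
\mathbf{H}_b^{\dagger}\mathbf{H}_b\;=\;ss^{\dagger}+\Delta\mathbf{H}_b^{\dagger}(\mathbf{I}_{N_b}-v_bv_b^{\dagger})\Delta\mathbf{H}_b\;\succeq\;ss^{\dagger},
\]
because $\mathbf{I}_{N_b}-v_bv_b^{\dagger}\succeq\mathbf{0}$ and $\abs{r}\leq\abs{\Delta\mathbf{H}_b}_2\abs{v_b}\leq\epsilon_b$. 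One then checks that, among all $\abs{r}\leq\epsilon_b$, the resulting determinant lower bound is smallest for the choice anti-parallel to the mean, $r=-\epsilon_bu_b$, i.e.\ $\Delta\mathbf{H}_b=-\epsilon_bv_bu_b^{\dagger}$: then $s=(\lambda_{\mu b}^{1/2}-\epsilon_b)u_b$, $ss^{\dagger}=(\lambda_{\mu b}^{1/2}-\epsilon_b)^{2}u_bu_b^{\dagger}=\mathbf{W}_{bw}$, and the corresponding $\mathbf{H}_{bw}=(\lambda_{\mu b}^{1/2}-\epsilon_b)v_bu_b^{\dagger}$ is unit rank. Combined with the reduction above this gives (\ref{eq:worst_Wb}).

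The main obstacle is precisely that last comparison. The reverse triangle inequality only yields $\abs{s^{\dagger}x}\geq\lambda_{\mu b}^{1/2}\abs{u_b^{\dagger}x}-\epsilon_b\abs{x}$, which is tight along $u_b$ but lossy on $u_b^{\perp}$, so a blanket Loewner domination $\mathbf{W}_b\succeq\mathbf{W}_{bw}$ is more than one can expect for an arbitrary $\mathbf{Q}$; the argument must exploit the coupling between $\mathbf{Q}$ and the perturbation. The safer route is to work directly with $\det(\mathbf{I}_{N_a}+\mathbf{H}_b\mathbf{Q}\mathbf{H}_b^{\dagger})=\prod_i\bigl(1+\sigma_i^{2}(\mathbf{H}_b\mathbf{Q}^{1/2})\bigr)$ and show this product is minimized over the spectral-norm ball $\abs{\Delta\mathbf{H}_b}_2\leq\epsilon_b$ by the rank-one, anti-parallel $\mathbf{H}_b$ --- e.g.\ through a first-order (KKT) analysis of $\min\log\det(\mathbf{I}_{N_a}+\mathbf{H}_b\mathbf{Q}\mathbf{H}_b^{\dagger})$ subject to $\abs{\Delta\mathbf{H}_b}_2\leq\epsilon_b$. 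I expect that optimization to be the only genuinely technical step; everything else is bookkeeping with determinants.
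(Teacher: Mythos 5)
Your opening reduction to $\det(\mathbf{I}_{N_a}+\mathbf{W}_b\mathbf{Q})\geq\det(\mathbf{I}_{N_a}+\mathbf{W}_{bw}\mathbf{Q})$ and the identity $\mathbf{H}_b^{\dagger}\mathbf{H}_b=ss^{\dagger}+\Delta\mathbf{H}_b^{\dagger}(\mathbf{I}_{N_b}-v_bv_b^{\dagger})\Delta\mathbf{H}_b$ are correct, but what you have written is not a proof. The Loewner route you set up is one you yourself concede fails --- and it does fail: for $r\perp u_b$, $ss^{\dagger}$ annihilates vectors that are not orthogonal to $u_b$, so $ss^{\dagger}\not\succeq c\,u_bu_b^{\dagger}$ for any $c>0$. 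The replacement you gesture at, a KKT analysis of $\min\log\det(\mathbf{I}_{N_a}+\mathbf{H}_b\mathbf{Q}\mathbf{H}_b^{\dagger})$ over $\abs{\Delta\mathbf{H}_b}_2\leq\epsilon_b$, is precisely the entire content of the proposition and is left undone. Moreover, that optimization does not terminate where you expect: for rank-one $\mathbf{Q}=Pqq^{\dagger}$ the minimizing perturbation cancels $\mathbf{H}_{\mu b}q$ directly and yields $\log\bigl(1+P(\lambda_{\mu b}^{1/2}\abs{u_b^{\dagger}q}-\epsilon_b)_+^2\bigr)$, which is strictly smaller than $\log\bigl(1+P(\lambda_{\mu b}^{1/2}-\epsilon_b)_+^2\abs{u_b^{\dagger}q}^2\bigr)$ whenever $0<\abs{u_b^{\dagger}q}<1$. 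Concretely, with $\Delta\mathbf{H}_b=v_br^{\dagger}$, $r\perp u_b$, $\norm{r}=\epsilon_b$, and $q$ chosen orthogonal to $s=\lambda_{\mu b}^{1/2}u_b+r$ but not to $u_b$, one gets $\det(\mathbf{I}_{N_a}+\mathbf{W}_b\mathbf{Q})=1<\det(\mathbf{I}_{N_a}+\mathbf{W}_{bw}\mathbf{Q})$. So the obstruction you noticed is not a technicality to be smoothed away by first-order conditions; it is the whole difficulty of the ``for any $\mathbf{Q}$'' claim, and the worst-case perturbation genuinely depends on $\mathbf{Q}$.

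The paper's own argument is entirely different and much shorter: it writes $\log\det(\mathbf{I}_{N_a}+\mathbf{W}_b\mathbf{Q})=\sum_i\log(1+\sigma_i^2(\mathbf{H}_b\mathbf{Q}^{1/2}))$ and invokes the singular-value perturbation bound of Lemma 7 in \cite{schaefer2015secrecy}, namely $\sigma_i^2((\mathbf{H}_{\mu b}+\Delta\mathbf{H}_b)\mathbf{Q}^{1/2})\geq(\sigma_i(\mathbf{H}_{\mu b})-\sigma_1(\Delta\mathbf{H}_b))_+^2\,\lambda_i(\mathbf{Q})$, which collapses the sum to the single term $\log(1+(\lambda_{\mu b}^{1/2}-\epsilon_b)_+^2\lambda_1(\mathbf{Q}))$ because $\mathbf{H}_{\mu b}$ is unit rank. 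That is the one tool your write-up never reaches for, and it is the crux of the paper's proof. (You should, however, test that coupled inequality against the rank-one example above before relying on it: the standard Weyl bound $\sigma_i(\mathbf{A}+\mathbf{B})\geq\sigma_i(\mathbf{A})-\sigma_1(\mathbf{B})$ controls $\sigma_i(\mathbf{H}_b)$ alone, not the product with $\mathbf{Q}^{1/2}$ matched index by index, and the example indicates the index-matched form cannot hold for every $\mathbf{Q}$.) As submitted, your argument establishes \eqref{eq:worst_Wb} only in the degenerate regime $\epsilon_b\geq\lambda_{\mu b}^{1/2}$ and for the single perturbation $\Delta\mathbf{H}_b=-\epsilon_bv_bu_b^{\dagger}$; it is a plan, not a proof.
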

\begin{proof}
The proof is give in Appendix \ref{app:W_bw}.
\end{proof}
The result of Proposition \ref{prop:worst_Wb} can interpreted as follows, the worst main channel is the channel that has lost all, but one, of its degrees of freedom, meanwhile, the only left degree of freedom happens with its maximum possible strength in the direction that is anti-parallel to the mean channel. An obvious direct consequence of Proposition \ref{prop:worst_Wb} is that the optimal input covariance, $\mathbf{Q}^*$, has to be of unit rank. That is because $\mathbf{W}_{bw}$ is shown to be of unit rank. Therefore, we conclude that beamforming is the optimal transmit strategy for the considered compound wiretap channel. Thus, we can restrict our analysis to a unit rank $\mathbf{Q}$. Next, we need to identify the worst eavesdropper Gram matrix.
\begin{prop}
\label{prop:worst_We}
For the considered compound wiretap channel, for any unit rank matrix $\mathbf{Q}$ with $\lambda(\mathbf{Q}) \leq P$, we have
\begin{align}
\label{eq:worst_We}
C(\mathbf{W}_{bw},\mathbf{W}_e,\mathbf{Q}) \geq C(\mathbf{W}_{bw},\mathbf{W}_{ew},\mathbf{Q})
\end{align}
where $\mathbf{W}_{ew} = (\lambda_{\mu e}+ 2 \lambda_{\mu e}^{1/2}\epsilon_e) u_e u_e^{\dagger} + \epsilon_e^2 \mathbf{I}$ is the worst eavesdropper channel Gram matrix.
\end{prop}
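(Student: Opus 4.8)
The plan is to collapse the minimization over $\mathcal{S}_e$ into a single operator (positive--semidefinite) domination statement for the eavesdropper Gram matrix, leaning on the unit--rank structure of $\mathbf{Q}$ supplied by Proposition~\ref{prop:worst_Wb}.

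First I would write $\mathbf{Q}=q\,\psi\psi^{\dagger}$ with $\norm{\psi}=1$ and $q=\lambda(\mathbf{Q})\le P$. Since $\mathbf{W}_{bw}=(\lambda_{\mu b}^{1/2}-\epsilon_b)_{+}^{2}\,u_bu_b^{\dagger}$ is fixed, the numerator $\det(\mathbf{I}+\mathbf{W}_{bw}\mathbf{Q})=1+q\,(\lambda_{\mu b}^{1/2}-\epsilon_b)_{+}^{2}\,\abs{u_b^{\dagger}\psi}^{2}$ of $C(\mathbf{W}_{bw},\mathbf{W}_e,\mathbf{Q})$ carries no dependence on $\mathbf{W}_e$; hence (\ref{eq:worst_We}) is equivalent to $\det(\mathbf{I}+\mathbf{W}_{ew}\mathbf{Q})\ge\det(\mathbf{I}+\mathbf{W}_e\mathbf{Q})$ for every $\mathbf{H}_e\in\mathcal{S}_e$. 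Invoking $\det(\mathbf{I}+\mathbf{W}\mathbf{Q})=1+q\,\psi^{\dagger}\mathbf{W}\psi$ for unit--rank $\mathbf{Q}$, this reduces to $\psi^{\dagger}\mathbf{W}_{ew}\psi\ge\psi^{\dagger}\mathbf{W}_e\psi$; and, $\psi$ being an arbitrary unit vector, it is precisely the operator inequality $\mathbf{W}_{ew}\succeq\mathbf{H}_e^{\dagger}\mathbf{H}_e$ for every admissible $\mathbf{H}_e=\mathbf{H}_{\mu e}+\Delta\mathbf{H}_{e}$ with $\abs{\Delta\mathbf{H}_{e}}_2\le\epsilon_e$.

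Next I would expand $\mathbf{H}_e^{\dagger}\mathbf{H}_e=\mathbf{H}_{\mu e}^{\dagger}\mathbf{H}_{\mu e}+\big(\mathbf{H}_{\mu e}^{\dagger}\Delta\mathbf{H}_{e}+\Delta\mathbf{H}_{e}^{\dagger}\mathbf{H}_{\mu e}\big)+\Delta\mathbf{H}_{e}^{\dagger}\Delta\mathbf{H}_{e}$ and bound the three pieces. The mean term is exactly $\mathbf{H}_{\mu e}^{\dagger}\mathbf{H}_{\mu e}=\lambda_{\mu e}\,u_eu_e^{\dagger}$ (using $\norm{v_e}=1$), and the quadratic term obeys $\Delta\mathbf{H}_{e}^{\dagger}\Delta\mathbf{H}_{e}\preceq\epsilon_e^{2}\mathbf{I}$ straight from the spectral--norm constraint. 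For the cross term I would use the rank--one form $\mathbf{H}_{\mu e}=\lambda_{\mu e}^{1/2}v_eu_e^{\dagger}$, which gives $\mathbf{H}_{\mu e}^{\dagger}\Delta\mathbf{H}_{e}=\lambda_{\mu e}^{1/2}\,u_ew^{\dagger}$ with $w:=\Delta\mathbf{H}_{e}^{\dagger}v_e$, $\norm{w}\le\epsilon_e$, so that the cross term equals $\lambda_{\mu e}^{1/2}\big(u_ew^{\dagger}+wu_e^{\dagger}\big)$, which I would dominate by $2\lambda_{\mu e}^{1/2}\epsilon_e\,u_eu_e^{\dagger}$. Summing the three bounds yields $\mathbf{H}_e^{\dagger}\mathbf{H}_e\preceq\big(\lambda_{\mu e}+2\lambda_{\mu e}^{1/2}\epsilon_e\big)u_eu_e^{\dagger}+\epsilon_e^{2}\mathbf{I}=\mathbf{W}_{ew}$, which is (\ref{eq:worst_We}). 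To justify calling $\mathbf{W}_{ew}$ the \emph{worst} Gram matrix rather than a mere outer bound, I would then exhibit a perturbation attaining it, namely $\Delta\mathbf{H}_{e}=\epsilon_e\sum_{j}v_ju_j^{\dagger}$ for orthonormal systems with $u_1=u_e$, $v_1=v_e$ (available when $N_e\ge N_a$), for which $\mathbf{H}_e^{\dagger}\mathbf{H}_e=\mathbf{W}_{ew}$ exactly.

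The main obstacle is the cross--term estimate $u_ew^{\dagger}+wu_e^{\dagger}\preceq 2\epsilon_e\,u_eu_e^{\dagger}$. A direct completion of squares gives only $u_ew^{\dagger}+wu_e^{\dagger}\preceq\epsilon_e\,u_eu_e^{\dagger}+\tfrac{1}{\epsilon_e}ww^{\dagger}$, so the residual $\tfrac{1}{\epsilon_e}ww^{\dagger}$ --- in particular the component of $w$ transverse to $u_e$ --- has to be shown to be already paid for by the $\epsilon_e^{2}\mathbf{I}$ slack hidden in $\Delta\mathbf{H}_{e}^{\dagger}\Delta\mathbf{H}_{e}$; equivalently, one must argue that the worst perturbation saturates the cross term along $u_e$ and the quadratic term at once rather than trading them off, so that one does not charge the $\epsilon_e$ budget twice and conjure an extraneous $O(\lambda_{\mu e}^{1/2}\epsilon_e)\,\mathbf{I}$ term. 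Pinning down this coupling --- by splitting $\Delta\mathbf{H}_{e}$ into the rank--one part $v_ev_e^{\dagger}\Delta\mathbf{H}_{e}$ carrying $w$ and the remainder annihilated by $v_e^{\dagger}$ --- is the only delicate point; the rest is elementary positive--semidefinite algebra, and monotonicity of $\det(\mathbf{I}+\mathbf{W}\mathbf{Q})$ in $\mathbf{W}\succeq\mathbf{0}$ then closes the argument.
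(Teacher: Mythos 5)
Your reduction of \eqref{eq:worst_We} to the Loewner-order statement $\mathbf{H}_e^{\dagger}\mathbf{H}_e\preceq\mathbf{W}_{ew}$ for every admissible $\mathbf{H}_e$ is indeed the exact reformulation of the proposition as stated (it quantifies over all unit-rank $\mathbf{Q}$, so the beamforming vector $\psi$ is arbitrary), but that operator inequality is \emph{false}, so your route is blocked. Concretely, take $\Delta\mathbf{H}_e=\epsilon_e\,v_e\tilde u^{\dagger}$ with $\tilde u\perp u_e$, $\norm{\tilde u}=1$, so $\abs{\Delta\mathbf{H}_e}_2=\epsilon_e$ and $\mathbf{H}_e^{\dagger}\mathbf{H}_e=\bigl(\lambda_{\mu e}^{1/2}u_e+\epsilon_e\tilde u\bigr)\bigl(\lambda_{\mu e}^{1/2}u_e+\epsilon_e\tilde u\bigr)^{\dagger}$. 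For $\psi=(\tilde u+\delta u_e)/\sqrt{1+\delta^2}$ one computes
\begin{align}
(1+\delta^2)\left(\psi^{\dagger}\mathbf{H}_e^{\dagger}\mathbf{H}_e\psi-\psi^{\dagger}\mathbf{W}_{ew}\psi\right)
=2\delta\epsilon_e\lambda_{\mu e}^{1/2}(1-\delta)-\epsilon_e^2\delta^2>0
\end{align}
for all sufficiently small $\delta>0$ (equivalently, $\mathbf{W}_{ew}-\mathbf{H}_e^{\dagger}\mathbf{H}_e$ restricted to $\mathrm{span}\{u_e,\tilde u\}$ has determinant $-\lambda_{\mu e}\epsilon_e^2<0$). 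This is precisely the cross-term issue you flag as ``the only delicate point,'' and it is not merely delicate but fatal: for $w\perp u_e$ the positive part of $u_ew^{\dagger}+wu_e^{\dagger}$ sits on the bisector of $u_e$ and $w$ with magnitude $\norm{w}$, so the cross term contributes $O(\lambda_{\mu e}^{1/2}\epsilon_e)$ at first order in the overlap $\delta$, while the slack $\epsilon_e^2\mathbf{I}-\Delta\mathbf{H}_e^{\dagger}\Delta\mathbf{H}_e$ vanishes along $\tilde u$ for this perturbation and the $u_eu_e^{\dagger}$ budget only helps at second order. The feasible set of Gram matrices simply has no maximum in the positive-semidefinite order, even though $\mathbf{W}_{ew}$ itself is attained by your aligned perturbation.

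The paper's proof takes a different and weaker tack that never asserts Loewner domination: it fixes $\mathbf{Q}=P\psi\psi^{\dagger}$, observes that $\mathbf{H}_{\mu e}\mathbf{Q}^{1/2}$ and $\Delta\mathbf{H}_e\mathbf{Q}^{1/2}$ are rank one with common right singular vector $\psi$, applies $\sigma(A+B)\le\sigma(A)+\sigma(B)$ to bound the denominator by $1+P\bigl(\lambda_{\mu e}^{1/2}\abs{u_e^{\dagger}\psi}+\epsilon_e\bigr)^2$, and then argues that the minimizing perturbation shares the singular vectors of $\mathbf{H}_{\mu e}$, leading to $\mathbf{W}_{ew}=U\Sigma_{ew}^2U^{\dagger}$. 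Since $\bigl(\lambda_{\mu e}^{1/2}\abs{u_e^{\dagger}\psi}+\epsilon_e\bigr)^2>\psi^{\dagger}\mathbf{W}_{ew}\psi$ whenever $0<\abs{u_e^{\dagger}\psi}<1$, your counterexample in fact probes a genuine looseness in the ``for any unit rank $\mathbf{Q}$'' phrasing of the proposition itself (the worst perturbation is $\psi$-dependent, $\Delta\mathbf{H}_e=\epsilon_e v_e\psi^{\dagger}$, and is not captured by a single $\mathbf{W}_{ew}$); but on any reading, your proof as proposed cannot be completed, because its central inequality fails.
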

\begin{proof}
The proof is given in Appendix \ref{app:W_ew}.
\end{proof}

The statement of proposition \ref{prop:worst_We} states that the worst eavesdropper channel is isotropic around its mean channel. This means that, the worst eavesdropper dropper channel happens with its maximum strength in the direction parallel to its mean channel. The main result of this paper is given in the following theorem. We give the compound secrecy capacity of the considered class of channel and the capacity achieving input signal covariance $\mathbf{Q}^*$.
\begin{theorem}
\label{thm:Q*_known}
The secrecy capacity for the compound wiretap channel defined in (\ref{eq:S1}) and (\ref{eq:S2}) is equal to the worst case capacity, the saddle point property holds
\begin{align}
\label{eq:Cs_known}
C_c^* &= \max_{\substack{\mathbf{Q} \succeq \mathbf{0} \\ \mathbf{tr}(\mathbf{Q}) \leq P}} \min_{\substack{\mathbf{W}_b : \mathbf{H}_b\in \mathcal{S}_b \\ \mathbf{W}_e : \mathbf{H}_e\in \mathcal{S}_e}} C(\mathbf{W}_b,\mathbf{W}_e,\mathbf{Q}) \nonumber \\
&= \min_{\substack{\mathbf{W}_b : \mathbf{H}_b\in \mathcal{S}_b \\ \mathbf{W}_e : \mathbf{H}_e\in \mathcal{S}_e}} \max_{\substack{\mathbf{Q} \succeq \mathbf{0} \\ \mathbf{tr}(\mathbf{Q}) \leq P}} C(\mathbf{W}_b,\mathbf{W}_e,\mathbf{Q})\nonumber \\
&= C(\mathbf{W}_{bw},\mathbf{W}_{ew},\mathbf{Q}^*)= C_w
\end{align}
where $\mathbf{W}_{bw}$ and $\mathbf{W}_{ew}$ are as given in Propositions \ref{prop:worst_Wb} and \ref{prop:worst_We} respectively. Moreover, beamforming is the optimal signaling strategy:
\begin{align}
\label{eq:q*_known}
\mathbf{Q}^* = P q_* q_*^{\dagger}, 
\end{align}
where $q_*$ is the eigenvector associated with the maximum eigenvalue of $(\mathbf{I}_{Na} + P \mathbf{W}_{ew})^{-1}(\mathbf{I}_{Na} + P \mathbf{W}_{bw})$.
\end{theorem}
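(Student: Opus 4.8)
\emph{Proof approach.} The plan is to exhibit the triple $(\mathbf{W}_{bw},\mathbf{W}_{ew},\mathbf{Q}^*)$ as a saddle point of $C(\mathbf{W}_b,\mathbf{W}_e,\mathbf{Q})$, the maximization being over feasible inputs $\mathbf{Q}\succeq\mathbf{0}$ with $\mathbf{tr}(\mathbf{Q})\le P$ and the minimization over feasible channel pairs $(\mathbf{W}_b,\mathbf{W}_e)$ with $\mathbf{H}_b\in\mathcal{S}_b$, $\mathbf{H}_e\in\mathcal{S}_e$; that is, I aim to show that $C(\mathbf{W}_{bw},\mathbf{W}_{ew},\mathbf{Q})\le C(\mathbf{W}_{bw},\mathbf{W}_{ew},\mathbf{Q}^*)\le C(\mathbf{W}_b,\mathbf{W}_e,\mathbf{Q}^*)$ for every feasible $\mathbf{Q}$ and every feasible pair $(\mathbf{W}_b,\mathbf{W}_e)$. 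Once both inequalities are in hand, the standard minimax lemma makes the two chained optimizations in (\ref{eq:Cs_known}) coincide with $C(\mathbf{W}_{bw},\mathbf{W}_{ew},\mathbf{Q}^*)$; since the first chain is $C_l$ of (\ref{eq:Cl}) and the second is $C_w$ of (\ref{eq:CW}), the bound $C_l\le C_c\le C_w$ of \cite{schaefer2015secrecy} then pins $C_c^*$ to the same value, which is (\ref{eq:Cs_known}).

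\emph{Right inequality (worst channels at $\mathbf{Q}^*$).} This is the worst-case-channel half and follows by evaluating Propositions \ref{prop:worst_Wb} and \ref{prop:worst_We} at $\mathbf{Q}=\mathbf{Q}^*$. Proposition \ref{prop:worst_Wb} holds for \emph{every} $\mathbf{Q}\succeq\mathbf{0}$ and every $\mathbf{W}_e$, so $C(\mathbf{W}_b,\mathbf{W}_e,\mathbf{Q}^*)\ge C(\mathbf{W}_{bw},\mathbf{W}_e,\mathbf{Q}^*)$ regardless of what $\mathbf{Q}^*$ turns out to be. Since $\mathbf{W}_{bw}=(\lambda_{\mu b}^{1/2}-\epsilon_b)_+^2 u_b u_b^{\dagger}$ is of unit rank, the input maximizing the rate against it is itself of unit rank (beamforming is optimal against a rank-one legitimate channel \cite{loyka2012optimal,li2009transmitter}); granting that $\mathbf{Q}^*$ has unit rank, Proposition \ref{prop:worst_We} applies at $\mathbf{Q}^*$ and yields $C(\mathbf{W}_{bw},\mathbf{W}_e,\mathbf{Q}^*)\ge C(\mathbf{W}_{bw},\mathbf{W}_{ew},\mathbf{Q}^*)$ for every feasible $\mathbf{W}_e$. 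Chaining these two bounds gives the right-hand inequality.

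\emph{Left inequality (optimal input against the worst channels).} Here I fix the channels at $(\mathbf{W}_{bw},\mathbf{W}_{ew})$ and maximize $C(\mathbf{W}_{bw},\mathbf{W}_{ew},\cdot)$ over the input. As $\mathbf{W}_{bw}$ is rank one, this maximum is attained at a rank-one input, which I write as $\mathbf{Q}=P\,qq^{\dagger}$ with $\norm{q}=1$ (using full power; in the degenerate case of a nonpositive secrecy rate the optimum is $\mathbf{Q}=\mathbf{0}$ and all claims are trivial). Applying Sylvester's determinant identity to both determinants in (\ref{eq:seCap_known_phi}) and using $q^{\dagger}q=1$,
\begin{align}
C(\mathbf{W}_{bw},\mathbf{W}_{ew},P\,qq^{\dagger}) &= \log\frac{1+P\,q^{\dagger}\mathbf{W}_{bw}q}{1+P\,q^{\dagger}\mathbf{W}_{ew}q} \nonumber \\
&= \log\frac{q^{\dagger}\left(\mathbf{I}_{N_a}+P\mathbf{W}_{bw}\right)q}{q^{\dagger}\left(\mathbf{I}_{N_a}+P\mathbf{W}_{ew}\right)q},
\end{align}
a generalized Rayleigh quotient in $q$. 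Because $\mathbf{W}_{ew}=(\lambda_{\mu e}+2\lambda_{\mu e}^{1/2}\epsilon_e)u_e u_e^{\dagger}+\epsilon_e^2\mathbf{I}\succeq\epsilon_e^2\mathbf{I}\succ\mathbf{0}$, the matrix $\mathbf{I}_{N_a}+P\mathbf{W}_{ew}$ is positive definite and hence invertible, so the quotient is maximized by the eigenvector $q_*$ associated with the largest eigenvalue of $(\mathbf{I}_{N_a}+P\mathbf{W}_{ew})^{-1}(\mathbf{I}_{N_a}+P\mathbf{W}_{bw})$. Hence $\mathbf{Q}^*=P\,q_*q_*^{\dagger}$ achieves $\max_{\mathbf{Q}}C(\mathbf{W}_{bw},\mathbf{W}_{ew},\mathbf{Q})$, which is exactly the left-hand inequality and at the same time establishes (\ref{eq:q*_known}); in particular $\mathbf{Q}^*$ has unit rank, closing the gap left open in the previous step.

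\emph{Main obstacle.} The delicate point is the unit-rank reduction of $\mathbf{Q}^*$: Proposition \ref{prop:worst_We} is stated only for unit-rank inputs, so the right-hand saddle inequality genuinely relies on knowing that the maximizer of $C(\mathbf{W}_{bw},\mathbf{W}_{ew},\cdot)$ over \emph{all} feasible covariances---not merely over rank-one ones---is rank one. This must be argued from the unit rank of $\mathbf{W}_{bw}$ together with the monotonicity of $\det(\mathbf{I}_{N_a}+\mathbf{W}_{ew}\mathbf{Q})$ in $\mathbf{Q}$, or quoted from \cite{loyka2012optimal,li2009transmitter}; it does not follow from the Rayleigh-quotient computation, which optimizes only within the rank-one class. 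A secondary, purely bookkeeping, issue is to confirm that $\mathbf{W}_{bw}$ and $\mathbf{W}_{ew}$ are realizable as $\mathbf{H}_b^{\dagger}\mathbf{H}_b$ and $\mathbf{H}_e^{\dagger}\mathbf{H}_e$ for some $\mathbf{H}_b\in\mathcal{S}_b$, $\mathbf{H}_e\in\mathcal{S}_e$ (at worst in the closure), so that the minimax lemma is applied over the correct set; this is supplied by the explicit constructions already used to prove Propositions \ref{prop:worst_Wb} and \ref{prop:worst_We}.
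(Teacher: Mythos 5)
Your proposal is correct and follows essentially the same route as the paper: verify the two saddle-point inequalities $C(\mathbf{W}_{bw},\mathbf{W}_{ew},\mathbf{Q})\le C(\mathbf{W}_{bw},\mathbf{W}_{ew},\mathbf{Q}^*)\le C(\mathbf{W}_b,\mathbf{W}_e,\mathbf{Q}^*)$ using Propositions \ref{prop:worst_Wb} and \ref{prop:worst_We}, then obtain $q_*$ as the dominant generalized eigenvector from the rank-one structure of $\mathbf{W}_{bw}$. You are somewhat more explicit than the paper on two points it handles by citation or silently --- the Rayleigh-quotient derivation of $q_*$ and the need to justify that the unconstrained maximizer against $(\mathbf{W}_{bw},\mathbf{W}_{ew})$ is rank one before Proposition \ref{prop:worst_We} can be invoked --- but the underlying argument is the same.
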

\begin{proof}
The proof is given in Appendix \ref{app:Q_known}. 
\end{proof}
Theorem \ref{thm:Q*_known} proves that the saddle point property holds for the class of channels described in (\ref{eq:S1}) and (\ref{eq:S2}), and thus, the secrecy capacity of the compound wiretap channel is equal to the worst case capacity. Further, since the worst case main channel is of unit rank, accordingly, generalized eigenvector beamforming is known to be the optimal signaling strategy \cite{loyka2012optimal}\cite{li2009transmitter}.

\subsection{Null Steering Beamforming as an Alternative Solution}
 As can be seen from Theorem \ref{thm:Q*_known}, the generalized eigenvector solution requires matrix inversion which may require a considerably high computational complexity especially when the number of transmitting antennas gets large. Therefore, we introduce the null steering (NS) beamforming \cite{null_Steering} as an alternative suboptimal solution. In our case, the NS beamforming matrix is given as follows:
\begin{align}
\label{eq:NS}
\mathbf{Q}_{ns} = P q_{ns} q_{ns}^{\dagger},\;\;\;\;\;\;& q_{ns}= \dfrac{\left[\mathbf{I}-u_eu_e^{\dagger}\right]u_b}{\norm{\left[\mathbf{I}-u_eu_e^{\dagger}\right] u_b}}.
\end{align}
NS beamformer can recognized as the projection of $u_b$ onto the null space of $u_w$. In particular, $q_{ns}$ maximizes the gain in the direction $u_b$ while creating a null notch in the direction $u_e$. Thus, it can understood as the transmission in the direction orthogonal to the eavesdropper mean channel direction maintaining the maximum possible gain in mean main channel direction. We give justifications for the choice of NS beamforming as a candidate suboptimal solution for our problem in Appendix \ref{app:NS}. Extensive computer simulation provided in section \ref{sec:simulation} reveals that NS performs extremely close to the optimal solution, yet, with no need for matrix inversion.

\section{Compound Secrecy Capacity With Eavesdropper Mean Uncertainty}
\label{sec:unknown_location}
Unlike the previous section where $u_e$ is assumed to be known at the transmitter, in this section we characterize the secrecy capacity of the considered compound wiretap channel when the eavesdropper mean direction, $u_e$, is known only to belong to the set $\mathcal{U}$.
 A key step toward the characterization of the secrecy capacity is to find the worst eavesdropper channel Gram matrix $\mathbf{W}_{ew}$ with that assumption. We give $\mathbf{W}_{ew}$ in the following proposition.
\begin{prop}
\label{prop:worst_We_unknown}
For the considered compound wiretap channel with $u_e \in \mathcal{U}$, for all $\mathbf{W}_b \in \mathcal{S}_1$ and any non negative definite matrix $\mathbf{Q}$ we have
\begin{align}
\label{eq:worst_We_unknown}
C(\mathbf{W}_b,\mathbf{W}_e,\mathbf{Q}) \geq C(\mathbf{W}_b,\mathbf{W}_{ew},\mathbf{Q})
\end{align}
where $\mathbf{W}_{ew} = (\lambda_{\mu e}+ 2 \lambda_{\mu e}^{1/2}\epsilon_e) u_* u_*^{\dagger} + \epsilon_e^2 \mathbf{I}$ is the worst eavesdropper channel Gram matrix where:
\begin{align}
\label{eq:u*}
u_* = \argmax_{u \in \mathcal{U}} u_b^{\dagger}u,
\end{align}
\end{prop}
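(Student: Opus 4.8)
The plan is to reduce Proposition \ref{prop:worst_We_unknown} to Proposition \ref{prop:worst_We} by a two-stage minimization over the eavesdropper channel set. Observe that any $\mathbf{H}_e \in \mathcal{S}_e$ is determined by a choice of mean direction $u_e \in \mathcal{U}$ together with a perturbation $\Delta\mathbf{H}_e$ with $\abs{\Delta\mathbf{H}_e}_2 \leq \epsilon_e$. Hence the minimization of $C(\mathbf{W}_b,\mathbf{W}_e,\mathbf{Q})$ over $\mathbf{H}_e \in \mathcal{S}_e$ can be written as an outer minimization over $u_e \in \mathcal{U}$ of the inner minimization over $\Delta\mathbf{H}_e$ with that fixed $u_e$. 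For each fixed $u_e$, Proposition \ref{prop:worst_We} already identifies the inner minimizer: the worst eavesdropper Gram matrix is $\mathbf{W}_{ew}(u_e) = (\lambda_{\mu e} + 2\lambda_{\mu e}^{1/2}\epsilon_e)\, u_e u_e^{\dagger} + \epsilon_e^2 \mathbf{I}$, and the resulting value is $C(\mathbf{W}_b, \mathbf{W}_{ew}(u_e), \mathbf{Q})$. (Note that Proposition \ref{prop:worst_We} is stated for unit-rank $\mathbf{Q}$, but as noted after Proposition \ref{prop:worst_Wb} the optimal $\mathbf{Q}$ is unit rank, so this is not a real restriction here; alternatively one re-runs the argument of Appendix \ref{app:W_ew} for general $\mathbf{Q}$, which goes through verbatim since that step only uses monotonicity of $\log\det$ in the eavesdropper direction.) So it remains to minimize $C(\mathbf{W}_b, \mathbf{W}_{ew}(u_e), \mathbf{Q})$ over $u_e \in \mathcal{U}$.

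The second stage is to show this outer minimum is attained at $u_* = \argmax_{u\in\mathcal{U}} u_b^{\dagger} u$. Since the additive term $\epsilon_e^2 \mathbf{I}$ in $\mathbf{W}_{ew}(u_e)$ does not depend on $u_e$, and $C(\mathbf{W}_b,\mathbf{W}_e,\mathbf{Q}) = \log\det(\mathbf{I} + \mathbf{W}_b\mathbf{Q}) - \log\det(\mathbf{I} + \mathbf{W}_e\mathbf{Q})$, minimizing over $u_e$ amounts to maximizing $\log\det\!\big(\mathbf{I} + (c\, u_e u_e^{\dagger} + \epsilon_e^2\mathbf{I})\mathbf{Q}\big)$ with $c = \lambda_{\mu e} + 2\lambda_{\mu e}^{1/2}\epsilon_e > 0$. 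Writing $\mathbf{Q} = P q q^{\dagger}$ (unit rank) and using the matrix determinant lemma, this reduces to maximizing a scalar expression that is increasing in $\abs{u_e^{\dagger} q}^2$ — roughly $\log\big(1 + \epsilon_e^2 P + c P \abs{u_e^{\dagger} q}^2\big)$ after absorbing the rank-one eavesdropper and input directions. The key geometric fact needed is then that the worst-case $q$ (the one the transmitter is forced to react to, or equivalently the direction relevant after the min over $u_e$ commutes appropriately) aligns the relevant overlap with $u_b$, so that the quantity to be maximized over $\mathcal{U}$ is monotone in $u_b^{\dagger} u_e$; this is exactly where $u_* = \argmax_{u\in\mathcal{U}} u_b^{\dagger} u$ enters. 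Intuitively, the eavesdropper does best by pointing its mean as close as possible to the main-channel mean direction $u_b$, because any sensible transmit strategy puts most of its energy near $u_b$.

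I expect the main obstacle to be handling the order of the max over $\mathbf{Q}$ and the min over $u_e$: one must argue that for \emph{every} admissible $\mathbf{Q}$ the worst $u_e$ is $u_*$, not merely that the saddle value is achieved there. The cleanest route is to show the pointwise inequality $C(\mathbf{W}_b, \mathbf{W}_{ew}(u_e), \mathbf{Q}) \geq C(\mathbf{W}_b, \mathbf{W}_{ew}(u_*), \mathbf{Q})$ for all $\mathbf{Q} \succeq \mathbf{0}$ with $\mathbf{tr}(\mathbf{Q}) \leq P$ and all $u_e \in \mathcal{U}$, which is what the proposition asserts. This requires showing $\log\det(\mathbf{I} + \mathbf{W}_{ew}(u_e)\mathbf{Q})$ is maximized over $\mathcal{U}$ at $u_*$ uniformly in $\mathbf{Q}$; for general $\mathbf{Q}$ this is not automatic since $\log\det(\mathbf{I} + (c\,uu^{\dagger} + \epsilon_e^2\mathbf{I})\mathbf{Q})$ depends on $u$ through the quadratic form $u^{\dagger}(\mathbf{Q}^{-1} + \epsilon_e^2\mathbf{Q}^{-1}\mathbf{Q}\cdots)^{-1}u$-type expression, not simply through $u^{\dagger} u_b$. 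The resolution is that, by Proposition \ref{prop:worst_Wb}, we may restrict to unit-rank $\mathbf{Q} = Pqq^{\dagger}$ and moreover the relevant $q$ for the outer problem lies in $\mathrm{span}\{u_b, u_e\}$ (or, after the min, in a direction whose overlap with any $u_e$ is governed by $u_b^{\dagger} u_e$); then the determinant lemma collapses everything to a scalar monotone in $\abs{u_b^{\dagger} u_e}$, and $u_*$ by definition maximizes it. Writing this reduction carefully — ensuring the restriction to $q \in \mathrm{span}\{u_b,u_e\}$ is legitimate for the minimizing configuration — is the delicate step; the rest is routine algebra with the matrix determinant lemma, paralleling Appendix \ref{app:W_ew}.
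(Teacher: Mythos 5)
Your overall route is the same as the paper's (Appendix \ref{app:W_e_un}): split the minimization over $\mathcal{S}_e$ into an outer minimization over the mean direction $u\in\mathcal{U}$ and an inner minimization over $\Delta\mathbf{H}_e$, dispose of the inner one by Proposition \ref{prop:worst_We} to get $\mathbf{W}_{ew}(u)=(\lambda_{\mu e}+2\lambda_{\mu e}^{1/2}\epsilon_e)uu^{\dagger}+\epsilon_e^2\mathbf{I}$, and then argue the outer minimum sits at the $u\in\mathcal{U}$ closest to $u_b$. That is exactly the paper's decomposition, so no credit is lost on strategy.

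However, the step you yourself flag as ``the delicate step'' is the actual content of the proposition, and you leave it open. With $\mathbf{Q}=Pqq^{\dagger}$, the quantity to be maximized over $u\in\mathcal{U}$ in the denominator is (up to constants) increasing in $\abs{u^{\dagger}q}$, so for a \emph{fixed} $\mathbf{Q}$ the worst mean direction is the $u$ maximizing $\abs{u^{\dagger}q}$ --- not $u_*=\argmax_{u\in\mathcal{U}}u_b^{\dagger}u$. The pointwise inequality \eqref{eq:worst_We_unknown} for ``any non negative definite $\mathbf{Q}$'' therefore does not follow from your argument, and as literally stated it can fail when $q$ is chosen far from $u_b$ and near some other element of $\mathcal{U}$: the identification of the worst $u$ with the one closest to $u_b$ is only valid after the maximization over $\mathbf{Q}$ is brought into play, i.e., at the saddle where the optimal beamforming direction is (essentially) aligned with $u_b$. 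To be fair, the paper's own proof has the same soft spot --- it asserts that dropping the constraint $u\in\mathcal{U}$ makes the minimum occur at $u=u_b$, which tacitly substitutes $q=u_b$ --- so you have not missed an ingredient the paper supplies. But a complete writeup must either (i) restrict the claim to the min-max value (where $q$ is subsequently optimized and Proposition \ref{prop:worst_Wb} forces it toward $u_b$), or (ii) prove monotonicity of the objective in $u_b^{\dagger}u$ for the specific optimizing $q$, e.g., by showing the optimizer lies in $\mathrm{span}\{u_b,u\}$ and carrying out the two-dimensional calculation. Your proposal names this obligation but does not discharge it.
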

\begin{proof}
The proof is given in appendix \ref{app:W_e_un}.
\end{proof}
Observe that, the assumption that $u_e \in \mathcal{U}$ does not affect $\mathbf{W}_{bw}$, thus, the optimal covariance is again of unit rank as in the case $\mathcal{U}=\{u_e\}$. Therefore, beamforming is still the optimal transmit strategy under the assumption $u_e \in \mathcal{U}$. Again, transmission in the direction of the eigenvector of $(\mathbf{I}_{N_a} + P \mathbf{W}_{ew})^{-1}(\mathbf{I}_{N_a} + P \mathbf{W}_{bw})$ is the optimal solution where $\mathbf{W}_{ew}$ as given in proposition \ref{prop:worst_We_unknown}. We give the optimal $\mathbf{Q}^*$ in the following corollary as a direct consequence of Theorem \ref{thm:Q*_known}
\begin{cor}
\label{cor:Q*_unknown}
The saddle point property \eqref{eq:Cs_known} holds for the considered compound wiretap channel with $u_e \in \mathcal{U}$. 
 Moreover, the optimal signaling scheme is zero mean Gaussian with covariance matrix given by 
\begin{align}
\label{eq:q*_unknown}
\mathbf{Q}^* = P q_* q_*^{\dagger}, 
\end{align}
where $q_*$ is the eigenvector associated with the maximum eigenvalue of $(\mathbf{I}_{Na} + P \mathbf{W}_{ew})^{-1}(\mathbf{I}_{Na} + P \mathbf{W}_{bw})$, where $\mathbf{W}_{bw}$ and $\mathbf{W}_{ew}$ are as given in propositions \ref{prop:worst_Wb} and  \ref{prop:worst_We_unknown}, respectively.

\end{cor}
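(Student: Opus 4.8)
The plan is to reduce the claim to Theorem~\ref{thm:Q*_known} by observing that, once the worst-case channels are pinned down, the uncertainty in $u_e$ merely amounts to replacing the vector $u_e$ by the fixed vector $u_*$ defined in \eqref{eq:u*}. First I would note that enlarging the eavesdropper uncertainty from $\mathcal{U}=\{u_e\}$ to a general compact set $\mathcal{U}$ leaves the main-channel set $\mathcal{S}_b$ untouched, so Proposition~\ref{prop:worst_Wb} still applies verbatim: the worst main-channel Gram matrix is $\mathbf{W}_{bw}=(\lambda_{\mu b}^{1/2}-\epsilon_b)_+^2 u_b u_b^{\dagger}$, which is of unit rank. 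Proposition~\ref{prop:worst_We_unknown} supplies the companion fact: the worst eavesdropper Gram matrix is $\mathbf{W}_{ew}=(\lambda_{\mu e}+2\lambda_{\mu e}^{1/2}\epsilon_e)u_* u_*^{\dagger}+\epsilon_e^2 \mathbf{I}$, i.e. the same ``rank-one plus scaled identity'' form as in Proposition~\ref{prop:worst_We} but with $u_e$ replaced by $u_*=\argmax_{u\in\mathcal{U}}u_b^{\dagger}u$ (the maximizer exists since $\mathcal{U}$ is compact and the objective is continuous).

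Next I would establish the saddle point. Chaining the two propositions, for every feasible input covariance (any $\mathbf{Q}\succeq\mathbf{0}$ with $\mathbf{tr}(\mathbf{Q})\leq P$) we have $C(\mathbf{W}_b,\mathbf{W}_e,\mathbf{Q})\geq C(\mathbf{W}_{bw},\mathbf{W}_e,\mathbf{Q})\geq C(\mathbf{W}_{bw},\mathbf{W}_{ew},\mathbf{Q})$, the first inequality from Proposition~\ref{prop:worst_Wb} and the second from Proposition~\ref{prop:worst_We_unknown} applied with $\mathbf{W}_b=\mathbf{W}_{bw}$; since the pair $(\mathbf{W}_{bw},\mathbf{W}_{ew})$ is realizable over $\mathcal{S}_b\times\mathcal{S}_e$, the inner minimization equals $C(\mathbf{W}_{bw},\mathbf{W}_{ew},\mathbf{Q})$. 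Hence $C_l=\max_{\mathbf{Q}}C(\mathbf{W}_{bw},\mathbf{W}_{ew},\mathbf{Q})$, while restricting the outer minimization defining $C_w$ to the single point $(\mathbf{W}_{bw},\mathbf{W}_{ew})$ gives $C_w\leq\max_{\mathbf{Q}}C(\mathbf{W}_{bw},\mathbf{W}_{ew},\mathbf{Q})$. Together with the always-valid sandwich $C_l\leq C_c\leq C_w$, this forces $C_l=C_c=C_w$, which is exactly the saddle point identity \eqref{eq:Cs_known}.

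Finally, for the optimal covariance I would invoke the remaining part of the proof of Theorem~\ref{thm:Q*_known}. The object to maximize, $C(\mathbf{W}_{bw},\mathbf{W}_{ew},\mathbf{Q})$ over $\mathbf{Q}\succeq\mathbf{0}$, $\mathbf{tr}(\mathbf{Q})\leq P$, has precisely the structure analyzed there --- $\mathbf{W}_{bw}$ unit rank and $\mathbf{W}_{ew}$ of the form $a\,u_* u_*^{\dagger}+\epsilon_e^2\mathbf{I}$ --- the only change being $u_e\mapsto u_*$. Consequently the same argument applies without modification: a unit-rank $\mathbf{W}_{bw}$ forces a unit-rank maximizer (so beamforming is optimal and full power is used), the optimal direction $q_*$ is the eigenvector associated with the largest eigenvalue of $(\mathbf{I}_{N_a}+P\mathbf{W}_{ew})^{-1}(\mathbf{I}_{N_a}+P\mathbf{W}_{bw})$, and zero-mean Gaussian signaling with $\mathbf{Q}^*=Pq_* q_*^{\dagger}$ achieves $C_c^*$.

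The one point that needs care --- and which is already secured by Propositions~\ref{prop:worst_Wb} and \ref{prop:worst_We_unknown} rather than being something to reprove here --- is that Proposition~\ref{prop:worst_We_unknown} holds for an \emph{arbitrary} (not necessarily unit-rank) $\mathbf{Q}$, so that the inner minimization over the channel classes decouples cleanly before one knows $\mathbf{Q}^*$ is unit rank, and that the worst-case pair $(\mathbf{W}_{bw},\mathbf{W}_{ew})$ is simultaneously realizable within $\mathcal{S}_b$ and $\mathcal{S}_e$ so that the bound $C_w\leq\max_{\mathbf{Q}}C(\mathbf{W}_{bw},\mathbf{W}_{ew},\mathbf{Q})$ is legitimate. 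Granting these, the corollary follows from Theorem~\ref{thm:Q*_known} with no further computation.
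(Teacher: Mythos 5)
Your proposal is correct and follows essentially the same route as the paper, which simply invokes Theorem~\ref{thm:Q*_known} with the worst eavesdropper mean direction replaced by $u_*$ from Proposition~\ref{prop:worst_We_unknown}; your added detail (the inner minimizer being independent of $\mathbf{Q}$, joint realizability of $(\mathbf{W}_{bw},\mathbf{W}_{ew})$, and the resulting $C_w\leq C_l$ squeeze) is a faithful unpacking of the argument already used in the proof of Theorem~\ref{thm:Q*_known}. No gaps beyond the caveat you yourself flag, which is inherited from the cited propositions rather than introduced by your proof.
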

\begin{proof}
Follows immediately by Theorem \ref{thm:Q*_known} while realizing that the worst eavesdropper mean channel is in the direction $u_*$. 
\end{proof}  
Corollary \ref{cor:Q*_unknown} extends Theorem \ref{thm:Q*_known} to the case of uncertainty about the eavesdropper mean channel direction. We can conclude that, since the transmitter does not know the eavesdropper mean channel, it design its signal assuming the worst eavesdropper mean channel. Again, we note that, NS beamforming still can be introduced as an alternative solution against an eavesdropper with mean direction uncertainty. For this particular scenario, $q_{ns}$ takes the same form as in (\ref{eq:NS}), yet, in the direction $u_*$ instead of $u_e$.
\section{Application to Rician Fading MIMO Wiretap Channel}
In this section we consider a special class of MIMO wiretap channels which is well adopted to the class of compound wiretap channel considered in this paper. We study the Rician fading MIMO wiretap channel. In a Rician fading environment, the deterministic line of sight (LOS) component causes the channel variations to be centered around a mean matrix. This mean matrix is usually of unit rank whose gain depends mainly on the distance between transmitter and receiver, array configuration and respective array orientation. In the next section we give the Rician fading MIMO channel model, and then, in section \ref{sec:conn_los_compound} we describe the relation between Rician MIMO wiretap channel and the compound wiretap channel described in (\ref{eq:S1}) and (\ref{eq:S2}).  
\subsection{Rician Fading MIMO Channel Model} 
\label{sec:chan_model}
Wireless MIMO channel with dominant LOS component is best described by the Rician fading model. In Rician fading model, the received signal can be decomposed into two components; one is the specular component originated from the LOS path and the other is the diffuse non-line of sight component (NLOS) component. Following, we give the mathematical model for the considered Rician MIMO wiretap channel with the subscript $\circ \in \{b,e\}$ denotes the legitimate and eavesdropper channels respectively. 
\begin{align}
\label{eq:ric_decomp}
\mathbf{H}_{\circ} = \mathbf{H}_{\circ}^{los} + \mathbf{H}_{\circ}^{nlos}, 
\end{align}
where $\mathbf{H}_{\circ}^{los}$ and $\mathbf{H}_{\circ}^{nlos}$ represents the LOS and NLOS components respectively and
\begin{align}
\label{eq:ric_decomp_det}
\mathbf{H}_{\circ}^{los} &= \sqrt{\dfrac{\gamma_{\circ}^2 k_{\circ}}{1+k_{\circ}}} {\mathbf{\Psi}_{\circ}}, &
\mathbf{H}_{\circ}^{nlos}  &= \sqrt{\dfrac{\gamma_{\circ}^2}{1+k_{\circ}}}\hat{\mathbf{H}}_{\circ}, 
\end{align}
where $\gamma_{\circ}$ quantifies the channel strength for both receiver and eavesdropper, $k_{\circ}$ is the Rician factor that facilitates the contribution of the LOS component to the received signal, $\mathbf{\Psi}_{\circ} =\mathbf{a}(\theta_{\circ})\mathbf{a}^{\dagger}(\phi_{\circ})$, $\mathbf{a}(\theta_{\circ})$ and $\mathbf{a}(\phi_{\circ})$ are the antenna array spatial signatures (steering vectors) at receiver (eavesdropper) and transmitter respectively, $\theta_{\circ}$ and $\phi_{\circ}$ are the angle of arrival (AoA) and angle of departure (AoD) of the transmitted signal respectively. Note that, AoD, $\phi$, represents the \textit{azimuth angle} of the receiver (eavesdropper) with respect to the transmitter antenna array. Meanwhile, $\hat{\mathbf{H}}_{\circ}$ represents the channel coefficients matrix for the NLOS signal component.

\subsection{Relation to the Compound Wiretap Channel} 
\label{sec:conn_los_compound}
In the previous section we gave the mathematical description of the Rician fading MIMO wiretap channel. In this section we highlight the equivalence relation between this class of MIMO wiretap channel and the compound wiretap channel studied in this paper. Recalling the definition of the compound wiretap channel given in (\ref{eq:S1}) and (\ref{eq:S2}), it is straight forward to see that the following analogies hold:
\begin{align}
\label{eq:analogy}
\mathbf{H}_{\mu \circ}   &\Leftrightarrow  \mathbf{H}_{\circ}^{los}, & \lambda_{\mu \circ}      &\Leftrightarrow  \dfrac{N_aN_{\circ}\gamma_{\circ}^2 k_{\circ}}{1+k_{\circ}}, \nonumber \\
v_{\circ}                &\Leftrightarrow  \mathbf{a}(\theta_{\circ}), & u_{\circ}                &\Leftrightarrow  \mathbf{a}(\phi_{\circ}), \nonumber \\
\Delta\mathbf{H}_{\circ} &\Leftrightarrow  \mathbf{H}_{\circ}^{nlos}, & \epsilon_{\circ}^2         &\Leftrightarrow  \dfrac{N_{\circ}\gamma_{\circ}^2}{1+k_{\circ}}.
\end{align}   
Observe that in the settings of Rician fading MIMO wiretap channel, eavesdropper eigen direction, $u_e$, corresponds to the physical direction (in azimuth plane) of the eavesdropper. Therefore, the assumptions that $u_e$ is known at the transmitter corresponds to the scenario in which the transmitter has a prior knowledge about the eavesdropper azimuth direction. Whereas, the assumption that $u_e \in \mathcal{U}$ corresponds to the scenario in which the transmitter does not know exactly the azimuth direction of the eavesdropper, meanwhile, it knows that the eavesdropper, if any, has a restricted access to the communication area. That is, it can only approach the transmitter up to a certain distance and the receiver up to a certain azimuth direction.
\subsection{Numerical results}
\label{sec:simulation}
For the sake of numerical evaluation, we use the established equivalence relation between the considered compound wiretap channel and the MIMO channel with Rician fading. We compare the performance of the optimal solution to our proposed NS beamforming solution. Given the azimuth direction of both eavesdropper and legitimate receiver, another two possible transmission schemes may come to mind. First, beamforming toward the intended receiver which is well known as MRT. Second, creating a deep null notch in the direction of the eavesdropper which is well known as ZF. To evaluate the value of the mean information, we provide numerical simulation for an eavesdropper having the same parameters as of the main receiver, i.e. $N_a=N_b=N_e=4$ and $\gamma_b=\gamma_e=1$, for the same Rician $k$ factor, and thus, we have $\epsilon_b=\epsilon_e$ and $\lambda_{\mu b}=\lambda_{\mu e}$. We assume uniform linear array configuration at all nodes with antenna spacing of half wavelength. Whereas, we assume the receiver and eavesdropper not to share the same azimuth direction, $\phi_b = 25^{\circ}$ and $\phi_e = 60^{\circ}$. As can be seen in Fig. (\ref{fig:Secrecy_Capacity}), we compare the achievable secrecy rate for the NS, MRT and ZF beamforming approaches against the optimal solution for different values of Rician $k$ factor. Simulation results shows that NS beamforming performs extremely close to optimal and outperforms both MRT and ZF over the entire SNR range. Although It may be seen that NS performance matches the optimal solution, we provide a zoom in picture at the upper left corner of Fig. (\ref{fig:Secrecy_Capacity}) to show that the acheivable rate by NS beamforming is slightly below the secrecy capacity of the channel. It is observed that it maintains a small gap to capacity in order of $10^{-4}$ over the entire SNR range for all values of $k$.       
\begin{figure}[ht]
\centering
\includegraphics[width=3 in, height = 2 in]{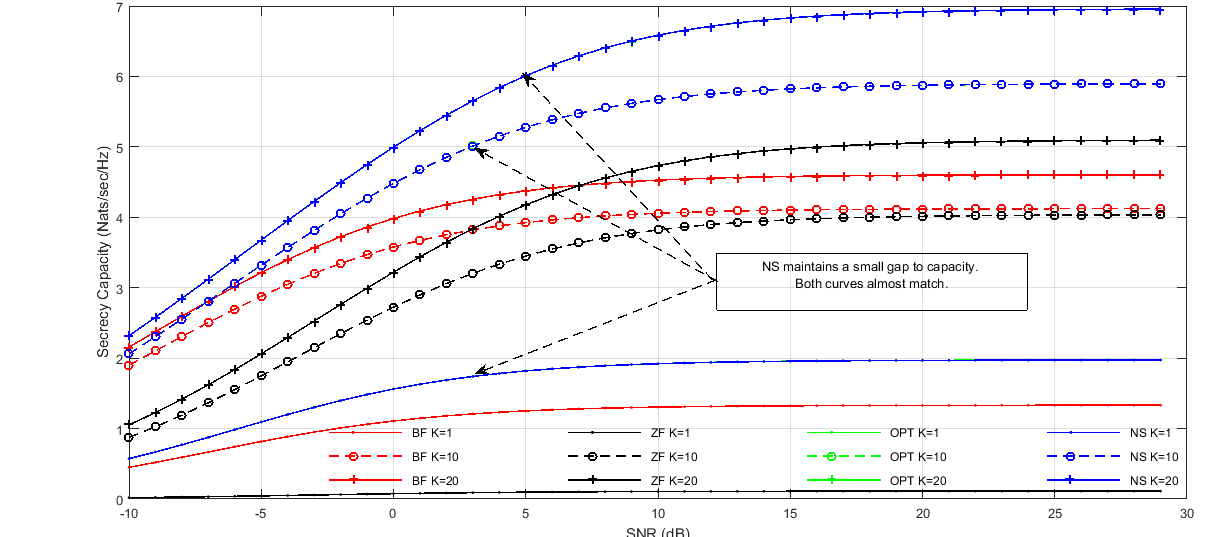}
\caption{Secrecy capacity of Rician compound wiretap channel with mean feedback. NS beamforming performs extremely close to optimal and outperforms both MRT and ZF approaches over the entire SNR range for different values of Rician $k$ factor. Legitimate transmitter is located at $25^{\circ}$, eavesdropper is at $60^{\circ}$. Number of antennas is $4$ for all entities.  
\label{fig:Secrecy_Capacity}}
\end{figure}
\section{Discussion and Future Work}
Compound MIMO wiretap channel with double sided uncertainty is considered under channel mean information model. The worst case main channel is shown to be anti-parallel to the channel mean information resulting in an overall unit rank channel. Further, the worst eavesdropper dropper channel is shown to be isotropic around its mean information. Accordingly, generalized eigenvector beamforming is shown to be the optimal signaling strategy. The saddle point property is shown to hold under mean information model, and thus, compound secrecy capacity equals to the worst case capacity over the class of uncertainty. Further, as the generalized eigenvector solution requires matrix inversion, we introduced NS beamforming, that is, transmission in the direction orthogonal to the eavesdropper mean channel direction maintaining the maximum possible gain in mean main channel direction, as an alternative suboptimal solution. Extensive computer simulation revealed that NS performs extremely close to the optimal solution. It also verified the superiority of NS beamforming to both MRT and ZF approaches over the entire SNR range.
\par 
It is worth noting that, the results for the compound wiretap channel are too conservative in general and, consequently, so is the result of this paper. That is due to the assumption that channel realizations remain constant over the entire transmission duration leading us to the worst case optimization. While this assumption simplifies the mathematical analysis, it does not usually hold in practice. More interesting scenario is to consider the compound wiretap channel with channel realizations allowed to change, possibly random, during transmission duration. 
\label{sec:conclusion}



\bibliographystyle{IEEEtran}
\bibliography{IEEEabrv,References}
%
%
%
\begin{appendices}
\section{Proof of Proposition \ref{prop:worst_Wb}}
\label{app:W_bw}
We observe that
\begin{align}
 C(\mathbf{W}_b,\mathbf{W}_{e},\mathbf{Q}) &= \log \dfrac{\det(\mathbf{I}_{N_a} + \mathbf{W}_b\mathbf{Q})}{\det(\mathbf{I}_{N_a} + \mathbf{W}_{e}\mathbf{Q})} \nonumber \\
&\stackrel{(a)}{=} \sum_{i=1}^{N_a} \log(1 + \lambda_i(\mathbf{W}_b\mathbf{Q})) \nonumber \\
&- \log\det(\mathbf{I}_{N_a} + \mathbf{W}_{e}\mathbf{Q}) \nonumber \\
&\stackrel{(b)}{=} \sum_{i=1}^{N_a} \log(1 + \sigma_i^2((\mathbf{H}_{\mu b}+\Delta\mathbf{H}_b)\mathbf{Q}^{1/2})) \nonumber \\
&- \log\det(\mathbf{I}_{N_a} + \mathbf{W}_{e}\mathbf{Q}) \nonumber \\
&\stackrel{(c)}{\geq} \log(1 + (\lambda_{\mu b}^{1/2} - \epsilon_b)_{+}^{2}\lambda_1(\mathbf{Q})) \nonumber \\
&- \log\det(\mathbf{I}_{N_a} + \mathbf{W}_{e}\mathbf{Q}) \nonumber \\
&\stackrel{(d)}{=} C(\mathbf{W}_{bw},\mathbf{W}_{e},\mathbf{Q})
\end{align}
where $(a)$ follows from determinant properties and $(b)$ follows by recognizing that $\lambda_i(\mathbf{W}_b\mathbf{Q}) = \sigma_i^2((\mathbf{H}_{\mu b}+\Delta\mathbf{H}_b)\mathbf{Q}^{1/2})$ where $\sigma_i(\mathbf{A})$ is the $i^{th}$ singular value of $\mathbf{A}$. Meanwhile, $(c)$ follows from the singular value inequality in Lemma 7 in \cite{schaefer2015secrecy}, that is, $\sigma_i^2((\mathbf{H}_{\mu b}+\Delta\mathbf{H}_b)\mathbf{Q}^{1/2}) \geq (\sigma_i(\mathbf{H}_{\mu b}) - \sigma_1(\Delta\mathbf{H}_b))_+ \lambda_i(\mathbf{Q})$ and removed the summation due to the fact that $\sigma_1(\mathbf{H}_{\mu b}) = \lambda_{\mu b}^{1/2}$ and $\sigma_i(\mathbf{H}_{\mu b}) = 0 \;\;\forall i>1$.
\section{Proof of Proposition \ref{prop:worst_We}}
\label{app:W_ew}
It can be seen that
\begin{align}
\label{eq:W_ew_proof}
 C(\mathbf{W}_{bw},\mathbf{W}_{e},\mathbf{Q})  &\stackrel{(a)}{=} \log \dfrac{\det(\mathbf{I}_{N_a} + (\lambda_{\mu b}^{1/2} - \epsilon_b)_+^{2} u_bu_b^{\dagger}\mathbf{Q})}{\det(\mathbf{I}_{N_a} + \mathbf{W}_{e}\mathbf{Q})} \nonumber \\
 &\stackrel{(b)}{=} \log \dfrac{1 + \lambda((\lambda_{\mu b}^{1/2} - \epsilon_b)_+^{2} u_bu_b^{\dagger}\mathbf{Q})}{1 + \sigma^2((\mathbf{H}_{\mu e} + \Delta \mathbf{H}_e)\mathbf{Q}^{1/2})} \nonumber \\
 &\stackrel{(c)}{=} \log \dfrac{1 + \lambda((\lambda_{\mu b}^{1/2} - \epsilon_b)_+^{2} u_bu_b^{\dagger}\mathbf{Q})}{1 + \sigma^2(\mathbf{H}_{\mu e} \mathbf{Q}^{1/2} + \Delta \mathbf{H}_e \mathbf{Q}^{1/2})} \nonumber \\
  &\stackrel{(d)}{\geq} \log \dfrac{1 + \lambda((\lambda_{\mu b}^{1/2} - \epsilon_b)_+^{2} u_bu_b^{\dagger}\mathbf{Q})}{1 + (\sigma(\mathbf{H}_{\mu e} \mathbf{Q}^{1/2}) + \sigma (\Delta \mathbf{H}_e \mathbf{Q}^{1/2}))^2} 
\end{align}
where $(a)$ follows by direct substitution in (\ref{eq:seCap_known_phi}) with $\mathbf{W}_{bw}$ given in proposition \ref{prop:worst_Wb}, $(b)$ follows since $\mathbf{Q}$ is of unit rank and that $\lambda(\mathbf{W}_b\mathbf{Q}) = \sigma^2((\mathbf{H}_{\mu b}+\Delta\mathbf{H}_b)\mathbf{Q}^{1/2})$ and $(c)$ is straightforward. Meanwhile, the upper bound in $(d)$ follows since $\sigma(A+B) \leq \sigma(A) + \sigma(B)$ for unit rank matrices $A$ and $B$ where the inequality holds with equality when $A$ and $B$ have the same singular vectors. Therefore, the inequality in $(d)$ established with equality if $\mathbf{H}_{\mu e}$  and $\Delta\mathbf{H}_{e}$ have the same singular vectors. Let us write $\mathbf{H}_{\mu e} = V\Sigma_{\mu e}U^{\dagger}$ where the first columns of $V$ and $U$ are $v_e$ and $u_e$ respectively, and $\Sigma_{\mu e} = diag\{\lambda_{\mu e}^{1/2},0,..,0\}$. Hence to establish $(d)$ with equality, $\Delta \mathbf{H}_e$ need to have $V$ and $U$ as its right and singular vectors respectively. Consequently, we can write $\mathbf{H}_e= V\Sigma_eU^{\dagger}$, and hence, $\mathbf{W}_e = U\Sigma_e^2U^{\dagger}$. But we have that $\Sigma_{\mu e} \preceq \epsilon_e \mathbf{I}$, accordingly, $\Sigma_e \preceq \Sigma_{ew}$ where $\Sigma_{ew} = diag\{\lambda_{\mu e}^{1/2}+\epsilon_e, \epsilon_e,...,\epsilon_e\}$. Noting that the function $\log \det(\mathbf{I}+\mathbf{W}\mathbf{Q})$ is monotonically increasing in $\mathbf{W}$, we conclude that $\mathbf{W}_{ew} = U \Sigma_{ew}^2U^{\dagger}$. However, we can write $\mathbf{W}_{ew}$ as $(\lambda_{\mu e}+ 2 \lambda_{\mu e}^{1/2}\epsilon_e) u_e u_e^{\dagger} + \epsilon_e^2 \mathbf{I}$ as required. 

\section{Proof of Theorem \ref{thm:Q*_known}}
\label{app:Q_known}

To establish the saddle point property, we give a proof similar to the one given in Theorem 6 in \cite{schaefer2015secrecy} while keeping in mind the difference between the compound wiretap channel defined there with the one defined in (\ref{eq:S1}) and (\ref{eq:S2}). Let $\mathbf{Q}^*$ be the optimal solution for the left hand side max-min problem, we observe that, to show the saddle point property in (\ref{eq:Cs_known}) is equivalent to show that \cite{zeidler1994nonlinear}:
\begin{align}
C(\mathbf{W}_{bw},\mathbf{W}_{ew},\mathbf{Q}) &\stackrel{(a)}{\leq} C(\mathbf{W}_{bw},\mathbf{W}_{ew},\mathbf{Q}^*) \nonumber \\
&\stackrel{(b)}{\leq} C(\mathbf{W}_{b},\mathbf{W}_{e},\mathbf{Q}^*),
\end{align}
where $\mathbf{W}_{ew}$ and $\mathbf{W}_{bw}$ are as given in propositions \ref{prop:worst_We} and \ref{prop:worst_Wb} respectively. Note that $(a)$ follows since $\mathbf{Q}^*$ is optimal for $\mathbf{W}_b = \mathbf{W}_{bw}$ and $\mathbf{W}_e = \mathbf{W}_{ew}$. Now we write:
\begin{align}
 C(\mathbf{W}_{bw},\mathbf{W}_{ew},\mathbf{Q}^*) &\stackrel{(a)}{=} \log(1 + \sigma_i^2((\lambda_{\mu b}^{1/2} - \epsilon_b)_+^{2} u_bu_b^{\dagger}\mathbf{Q}^{*1/2})) \nonumber \\
&\;\;\;\;- \log \det(\mathbf{I}+\mathbf{W}_{ew}\mathbf{Q}^*)  \nonumber \\
&\stackrel{(b)}{\leq}\sum_{i=1}^{N_a} \log(1 + \sigma_i^2((\mathbf{H}_{\mu b}+\Delta\mathbf{H}_b)\mathbf{Q}^{*1/2})) \nonumber \\
&\;\;\;\;-\log \det(\mathbf{I}+\mathbf{W}_{ew}\mathbf{Q}^*)  \nonumber \\
&\stackrel{(c)}{=} C(\mathbf{W}_{b},\mathbf{W}_{ew},\mathbf{Q}^*) \nonumber \\
&\stackrel{(d)}{\leq} C(\mathbf{W}_{b},\mathbf{W}_{e},\mathbf{Q}^*)
\end{align}
where $(a)$ follows by direct substitution by $\mathbf{W}_{bw}$, meanwhile, $(b)$ and $(c)$ follow from (\ref{eq:worst_Wb}) and we used (\ref{eq:worst_We}) to write $(d)$. Since the difference channel is, at most, of unit rank, then, beamforming toward the eigenvector associated with the largest eigenvalue of $(\mathbf{I}_{Na} + P \mathbf{W}_{ew})^{-1}(\mathbf{I}_{Na} + P \mathbf{W}_{bw})$ follows by corollary 1 in \cite{loyka2012optimal} and Theorem 6 in \cite{li2009transmitter}.
\section{Proof of Proposition \ref{prop:worst_We_unknown}}
\label{app:W_e_un}
Observe that $\mathbf{H}_{\mu e}^{\dagger}\mathbf{H}_{\mu e} = \lambda_{\mu e}uu^{\dagger}$ for some $u \in \mathcal{U}$. Thus, the result of proposition \ref{prop:worst_We_unknown} can be established in a fashion similar to the proof of proposition \ref{prop:worst_We}, however, by realizing that
\begin{align}
\min_{\mathbf{W}_e : \mathbf{H}_e \in \mathcal{S}_e} C(&\mathbf{W}_{bw},\mathbf{W}_{e}(u),\mathbf{Q}) = \nn  \\
 &\min_{u \in \mathcal{U}} C(\mathbf{W}_{bw},(\lambda_{\mu e}+ 2 \lambda_{\mu e}^{1/2}\epsilon_e) u u^{\dagger} + \epsilon_e^2 \mathbf{I},\mathbf{Q}). 
\end{align}
Thus, taking the minimum of (\ref{eq:W_ew_proof}) over $u$ and dropping the constraint $u \in \mathcal{U}$, the minimum is attained when $u=u_b$. Meanwhile, with the constraint into action, the minimum is attained at $u_* \in \mathcal{U}$ which has the minimum distance to $u_b$. Equivalently,
\begin{align}
u_* &= \argmin_{u \in \mathcal{U}} \norm{u_b-u} \nonumber \\
    &= \argmax_{u \in \mathcal{U}}  u_b^{\dagger} u
\end{align} 
which agree with (\ref{eq:u*}).

\section{Justification for NS Beamforming}
\label{app:NS}
To understand the motivation behind introducing NS beamforming as an alternative solution, we write $\mathbf{Q} = P qq^{\dagger}$. Now, it can be seen that:
\begin{align}
\label{eq:Q*_proof_known}
C_c^* &= \max_{\substack{\mathbf{Q} \succeq \mathbf{0} \\ \mathbf{tr}(\mathbf{Q}) \leq P}} C(\mathbf{W}_{bw},\mathbf{W}_{ew},\mathbf{Q}) \nonumber \\
&\stackrel{(a)}{=} \max_{\substack{\mathbf{Q} \succeq \mathbf{0} \\ \mathbf{tr}(\mathbf{Q}) \leq P}} \log \dfrac{\det(\mathbf{I}+\mathbf{W}_{bw}\mathbf{Q})}{\det(\mathbf{I}+\mathbf{W}_{ew}\mathbf{Q})}\nonumber \\
&\stackrel{(b)}{=} \max_{\substack{\mathbf{Q} \succeq \mathbf{0} \\ \mathbf{tr}(\mathbf{Q}) \leq P}} \log \dfrac{\det(\mathbf{I}+(\lambda_{\mu b}^{1/2} - \epsilon_b)_+^{2} u_bu_b^{\dagger}\mathbf{Q})}{\det(\mathbf{I}+(\lambda_{\mu e}+ 2 \lambda_{\mu e}^{1/2}\epsilon_e) u_e u_e^{\dagger} + \epsilon_e^2 \mathbf{I} )\mathbf{Q})} \nonumber \\
&\stackrel{(c)}{=} \max_{\substack{q \\ \norm{q}=1}} \log \dfrac{\det(\mathbf{I}+P(\lambda_{\mu b}^{1/2} - \epsilon_b)_+^{2} u_bu_b^{\dagger}qq^{\dagger})}{\det(\mathbf{I}+P((\lambda_{\mu e}+ 2 \lambda_{\mu e}^{1/2}\epsilon_e)u_eu_e^{\dagger}qq^{\dagger} + \epsilon_e^2qq^{\dagger} ))} \nonumber \\
&\stackrel{(d)}{=} \max_{\substack{q \\ \norm{q}=1}} \log \dfrac{1 +P(\lambda_{\mu b}^{1/2} - \epsilon_b)_+^{2} u_b^{\dagger}q}{1+P((\lambda_{\mu e}+ 2 \lambda_{\mu e}^{1/2}\epsilon_e)u_e^{\dagger}q + \epsilon_e^2))}
\end{align}
where $(a)$ follows by direct substitution with $\mathbf{W}_{bw}$ and $\mathbf{W}_{ew}$ in (\ref{eq:seCap_known_phi}), $(b)$ follows by substituting the values of $\mathbf{W}_{bw}$ and $\mathbf{W}_{ew}$. Meanwhile, in $(c)$ we used that $\mathbf{Q}$ is of unit rank and thus it has only one eigenvalue equals to $P$ and its corresponding eigenvector $q$, also, we have removed the power constraint by introducing the constraint $\norm{q}=1$. Since both of the numerator and the denominator are of unit rank, $d$ follows from $(c)$ by substituting the only eigenvalue of both of them.  Now observe that, the choice of $q$ does not affect the eigenvalue of the matrix $\epsilon_e^2qq^{\dagger}$, rather, it do affect the eigenvalues of the other matrices. Clearly our objective is to find $q$ that simultaneously maximizes the numerator and minimizes (optimally, nulling out) the denominator in (\ref{eq:Q*_proof_known}$(d)$).  The optimal $q_*$ that maximizes $C_c$ is given by Theorem \ref{thm:Q*_known}. However, we note that $q_{ns}$ in (\ref{eq:NS}) is the optimal solution to the following optimization problem
\begin{align}
\max_{\substack{{q} \\ \norm{q}=1}} <q^{\dagger},u_b> \nonumber \\
\text{Subject to} <q^{\dagger},u_e> = 0,
\end{align} 
i.e., beamforming in the direction $q_{ns}$ maximizes the gain in the direction $u_b$ while creating a null notch in the direction $u_e$.
\end{appendices}
\end{document}